\theoremstyle{remark}
\newtheorem{proposition}{Proposition}
\begin{document}
\title{Quantum capacity of bosonic dephasing channel}
\author{Amir Arqand}
\affiliation{%
	Department of Physics, Sharif University of Technology, Tehran, Iran}
\author{Laleh Memarzadeh}
\affiliation{%
	Department of Physics, Sharif University of Technology, Tehran, Iran}
\author{Stefano Mancini}
\affiliation{%
School of Science \& Technology, University of Camerino, I-62032 Camerino, Italy}
\affiliation{
INFN Sezione di Perugia, I-06123 Perugia, Italy}
\date{\today}
\begin{abstract}
We study the quantum capacity of continuous-variable dephasing channel, which is a notable
example of a non-Gaussian quantum channel. We prove that a single letter formula applies. 
 The optimal input state is found to be diagonal in the Fock basis and with a distribution that
is a discrete version of a Gaussian. Relations between its mean/variance and dephasing rate/input
energy are put forward. We then show that by increasing the input energy, the capacity saturates to a finite value. We also show that it decays exponentially for large values of dephasing rates.
\end{abstract}

\maketitle


\section{Introduction}

Any physical process can be regarded as a quantum channel, i.e. a stochastic map 
on the space of states that causes a state change.
As such it can be characterized by its ability in conveying information.
A notable example is provided by the quantum capacity of a channel, that shows its ability to transfer unaltered the entanglement of the input system with a reference system \cite{MW20}.

Finding the quantum capacity of a channel is challenging because not only the optimization of 
an entropic functional (coherent information) over input states is required, but also its regularization \cite{Qcomplexity,Qcomplexity2,Qcomplexity3}. 
This task becomes even harder when dealing with infinite-dimensional (so-called continuous-variable) systems. That is why till now, in this framework, the attention has been confined to Gaussian channels, i.e. maps that transform Gaussian states into Gaussian states \cite{Gchannel,Gchannel2}. 
For instance, the coherent information of the lossy channel (a special case of Gaussian channels) is known to be additive and hence its quantum capacity is computed \cite{WPG07} (see \cite{WQ16} for the general formalism of energy-constrained quantum capacity).
For more general Gaussian channels, an upper bound of quantum capacity can be obtained by evaluating one-shot coherent information of the channel \cite{HW01}.
 
Nevertheless, there is an increasing pressure to go beyond the Gaussian channels paradigm \cite{MM,GHLP20}.  
Heading in this direction, we investigate here the quantum capacity of one of the most physically relevant 
non-Gaussian channels, namely the dephasing channel (see e.g. \cite{WM}). 
It causes the reduction of the off-diagonal terms in the Fock basis, 
thus washing out coherence properties of the state.
This happens for instance with uncertainty path length in optical fibers \cite{Derickson}.

Here we prove that for dephasing channel the single letter formula applies for the quantum capacity. 
The optimal input state is found to be a non-Gaussian state, which is diagonal in the Fock basis and with a distribution
that is a discrete version of a continuous Gaussian distribution. In fact we show that the optimal probability distribution is a symmetric unimodal probability distribution. Then we take discrete Gaussian probability distribution as an ansatz and show that by proper selection of mean value and variance, it becomes a perfect fit for optimal probability distribution.
 The relation between optimal mean/variance and dephasing rate/input energy are put forward.
Finally, we show that by increasing input energy, the quantum capacity saturates to a finite value which depends on the noise parameter of the channel. We also
 show that, for a large value of dephasing rate, the quantum capacity decays exponentially with dephasing rate.

The structure of the paper is as follows: In section \ref{sec:backgroun} we set our notation and explain the terms we need for our next purposes. Section \ref{sec:QuantumDephasing} is for short review on the quantum dephasing channel and its different representations that we are going to use in proceeding sections. Section \ref{sec:QuantumCapacity} is devoted to the quantum capacity
of the dephasing channel, containing analytical results for proving that single letter formula applies and showing the structure of the optimal input state. In section \ref{sec:numerical} we introduce our approach for using the replica method 
to numerically evaluate quantum capacity. Its asymptotic behavior is then discussed in section \ref{sec:asymptotic}. Finally, section \ref{sec:conclusion} concludes with a summary and discussion of the results.


\section{notation and preliminaries}
\label{sec:backgroun}

In this section we set our notation and review relevant concepts and terms used in the proceeding sections for deriving the quantum capacity of bosonic dephasing channel.  
Here, states of the initial system and environment respectively belong to Hilbert-spaces denoted by $\mathcal{H}_S$ and $\mathcal{H}_E$. Similarly, the Hilbert-space of the final system and environment are respectively denoted by $\mathcal{H}_{S'}$ and $\mathcal{H}_{E'}$.
Density operators on Hilbert-space $\mathcal{H}$, belong to $\mathcal{T}(\mathcal{H})$, the set of linear positive operators on $\mathcal{H}$ with trace one.

Due to the unavoidable interaction between system and environment, described by an isometry $U:\mathcal{H}_{S}\otimes \mathcal{H}_{E}\to\mathcal{H}_{S'}\otimes \mathcal{H}_{E'}$, there are noise effects on the system. The most general form of a system evolution is then given by a completely positive trace preserving (CPTP) map or a quantum channel
$\mathcal{N}:\mathcal{T}(\mathcal{H}_S)\to\mathcal{T}(\mathcal{H}_{S'})$ described by tracing over environment degrees of freedom of the final state, that is 
\begin{equation}
    \mathcal{N}(\rho)=\operatorname{Tr}_{E'}\left(U(\rho\otimes\ket{0}\bra{0})U^\dagger\right).
\end{equation}
 Here $\rho\in\mathcal{T}(\mathcal{H}_S)$ is any initial state of the system and $\ket{0}\in\mathcal{H}_E$ a fixed initial state of the environment. 
For each channel $\mathcal{N}$, its complementary channel $\mathcal{N}^c$ is a CPTP map $\mathcal{N}^c:\mathcal{T}(\mathcal{H}_S)\to\mathcal{T}(\mathcal{H}_{E'})$ given by
\begin{equation}
\label{eq:complementary}
    \mathcal{N}^c(\rho)=\operatorname{Tr}_{S'}\left(U(\rho\otimes\ket{0}\bra{0})U^\dagger\right).
\end{equation}
Two important properties of quantum channels which we use here are degradablity and entanglement breaking. A channel $\mathcal{N}:\mathcal{T}(\mathcal{H}_S)\to\mathcal{T}(\mathcal{H}_{S'})$, is degradable if there exists a channel 
$\mathcal{M}:\mathcal{T}(\mathcal{H}_{S'})\to\mathcal{T}(\mathcal{H}_{E'})$ such that
\begin{equation}
    \mathcal{M}\circ\mathcal{N}=\mathcal{N}^c,
\end{equation}
where $\circ$ denotes composition of maps \cite{DevetakShor2005}. To recall the definition of entanglement breaking channel, first we denote the Hilbert space of a reference state by 
$\mathcal{H}_R$ and the identity operator on it by $\mathds1_R$. Then,
a quantum channel $\mathcal{N}:\mathcal{T}(\mathcal{H}_S)\to\mathcal{T}(\mathcal{H}_{S'})$ is entanglement breaking 
if the map $\mathds1_R\otimes\mathcal{N}: \mathcal{T}(\mathcal{H}_R\otimes\mathcal{H}_S)\to\mathcal{T}(\mathcal{H}_R\otimes\mathcal{H}_{S'})$ maps every density operator to a separable state \cite{Holevo_eb,Horodecki_eb}.
 Actually it is known that if $\mathds1_R\otimes\mathcal{N}$ maps  a maximally entangled state to a separable state, $\mathcal{N}$ is entanglement breaking \cite{Holevo_eb,Horodecki_eb}.

 The quantum capacity of a channel $\mathcal N$,  is the highest rate of reliable quantum information transmission through the channel. It can be expressed in terms of the coherent information of the channel's output state. The latter quantity is defined as
\begin{equation}
J(\rho,{\cal N}) \equiv S\left({\cal N}(\rho)\right)-S\left(\mathcal{N}^c(\rho)\right),
\end{equation}
with $S(\rho)=-\text{Tr}\left(\rho\log\rho\right)$ the von Neumann entropy of $\rho$ (throughout the paper we use logarithm to base 2). 
 Then, the quantum capacity of the $\mathcal{N}$ results as the regularized maximum coherent information of the output of infinitely many channel's uses \cite{MW20}, that is
\begin{equation}
\label{Q}
Q(\mathcal{N})=\lim_{n\to\infty}\frac{1}{n} \left[
\max_{\rho^{(n)}}
J\left(\rho^{(n)},{\cal N}^{\otimes n}\right)
\right],
\end{equation}
with maximization over all density operators 
$\rho^{(n)}\in\mathcal{T}(\mathcal{H}_S^{\otimes n})$. 
For channels with additive coherent information, maximizing the coherent information of single channel use over density operators on $\mathcal{H}_S$ is sufficient for computing the quantum capacity. Hence the formula \eqref{Q} can be simplified to single-letter expression \cite{DevetakShor2005}: 
\begin{equation}
    \label{eq:Q1}
    Q(\mathcal{N})=\max_{\rho}
J\left(\rho,{\cal N}\right).
\end{equation}
For degradable channels, the coherent information is additive \cite{DevetakShor2005}. Furthermore, a channel $\mathcal{N}$ is degradable if an only if its complementary channel 
$\mathcal{N}^c$ is entanglement breaking \cite{Cubitt}. 
Therefore, the quantum capacity of degradable channels, or channels with entanglement breaking complementary, is given by the single-letter formula in Eq.~(\ref{eq:Q1}).

\section{Quantum dephasing channel}
\label{sec:QuantumDephasing}
In this section we describe the quantum dephasing channel by giving various representations for it. Furthermore, we explain how such a channel is related to a Markovian process and forms a semi-group.

The continuous-variable quantum dephasing effect (see e.g. \cite{WM}) provides a notable example of a non-Gaussian channel. 
The channel $\mathcal {N}_\gamma:{\cal T}({\cal H}_S)\to{\cal T}({\cal H}_S)$ (note that here  ${\cal H}_S$ and ${\cal H}_{S'}$ are isomorphic) can be dilated into a single mode environment with the following unitary 
\begin{eqnarray}
\label{eq:U}
U&=&e^{-i\sqrt{\gamma} (a^\dag a)(b+b^\dag)}\notag\\
&=&e^{-i\sqrt{\gamma} (a^\dag a) b^\dag} 
e^{-i\sqrt{\gamma} (a^\dag a) b} e^{-\frac{1}{2}\gamma (a^\dag a)^2}.  
\end{eqnarray}
Here $a$ and $a^\dagger$ are bosonic ladder operators acting on system Hilbert space 
${\cal H}_S$, $b,b^\dag$ are bosonic ladder operators on the environment 
Hilbert space ${\cal H}_{E'}$
(isomorphic to ${\cal H}_E$ and $\mathcal{H}_S$), and $\gamma\in[0,+\infty)$ is a parameter that determines the dephasing rate.
The unitary evolution by Eq.~(\ref{eq:U}), can be represented as a controlled displacement gate in a quantum circuit. The system acts as a controlled mode prepared in the Fock basis and target -- environment -- mode experiences a displacement proportional to $\sqrt{\gamma}$.

For the system evolution, by tracing over environment degrees of freedom, 
we get
\begin{equation}
\label{eq:NU}
\rho\mapsto\mathcal{N}_{\gamma}(\rho)={\rm Tr}_E\left[U\left(\rho\otimes |0\rangle\langle 0|\right) U^\dag\right].
\end{equation}
with $\ket{0}$, the vacuum of the environment. If we expand the input state in the Fock basis $\rho=\sum_{m,n=0}^\infty \rho_{m,n}|m\rangle\langle n|$ the effect of ${\cal N}_\gamma$ reads
\begin{equation}
\label{eq:rhoout}
\rho\mapsto\mathcal{N}_{\gamma}(\rho)=\sum_{m,n}^\infty e^{-\frac{1}{2}\gamma(m-n)^2}\rho_{m,n} |m\rangle\langle n|,
\end{equation}
which clearly shows that the diagonal elements of the input are preserved, while the off diagonal ones tend to be washed out. The channel's output as given in Eq.~(\ref{eq:rhoout}), is also the solution of the following Markov master equation: 

\begin{equation}
\label{eq:mastereq}
    \dot{\rho}(t)=\mathcal{L}[\rho(t)],
\end{equation}
with 
\begin{equation}
\label{eq:Gen}
    \mathcal{L}[\bullet]\equiv 2(a^\dagger a)\bullet(a^\dagger a)
    -\left((a^\dagger a)^2\bullet\right)
    -\left(\bullet (a^\dagger a)^2\right),
\end{equation}
where $\bullet$ is any operator in $\mathcal{T}(\mathcal{H}_S)$ and 
$t\equiv\gamma$. Hence, the set of dephasing channels 
$\{{\cal N}_\gamma\}$ forms a semi-group under composition:  ${\cal N}_\gamma\circ{\cal N}_{\gamma'}
={\cal N}_{\gamma+\gamma'}$.

Kraus representation of the channel is given by
\begin{equation}\label{deph}
\rho \mapsto {\cal N}_\gamma(\rho)=\sum_{j=0}^\infty K_j \rho K_j^\dag,
\end{equation}
where Kraus operators $K_j=\bra{j}U\ket{0}$, with $\ket{j}\in\mathcal{H}_E$ being the number state in environment and unitary evolution $U$ as given in Eq.~(\ref{eq:U}), have the following explicit form:  
\begin{equation}
\label{Kdeph}
K_j= e^{-\frac{1}{2}\gamma (a^\dag a)^2} \frac{\left(-i\sqrt{\gamma} a^\dag a\right)^j}{\sqrt{j!}}.
\end{equation}
The channel's action can also be written as
\begin{equation}
\label{eq:intdep}
\rho\mapsto \mathcal{N}_{\gamma}(\rho)=\int_{-\infty}^{+\infty} e^{-ia^\dag a\phi} \rho e^{ia^\dag a\phi }\, p(\phi) d\phi,
\end{equation}
with
\begin{equation}
p(\phi)=\sqrt{\frac{\gamma}{2\pi}}e^{-\frac{1}{2} \gamma \phi^2}.
\label{pphi}
\end{equation}
This means a randomization of the phase $\phi$ according to the probability distribution \eqref{pphi}.
Note that  $\phi$ as a random variable must be defined on the sample space 
$\mathbb{R}$, not $[0,2\pi]$.\newline

\section{Quantum capacity}
\label{sec:QuantumCapacity}
In this section we derive the explicit form of the complementary channel of bosonic dephasing channel \eqref{eq:NU},\eqref{eq:rhoout}. We show that the quantum capacity of the latter is given by the single-letter formula (\ref{eq:Q1}). Based on this we derive the structure of the optimal input state.

For the channel $\mathcal{N}_\gamma$ in Eq.~(\ref{eq:NU}), the complementary channel (\ref{eq:complementary})
$\mathcal{N}^c_\gamma:{\cal T}({\cal H}_S)\to{\cal T}({\cal H}_E)$ is given by

\begin{align}
\label{eq:compl}
\rho\mapsto \mathcal{N}^c_\gamma(\rho)&= {\rm Tr}_S\left[U\left(\rho\otimes \ket{0}\bra{0}\right) U^\dag\right]\notag\\
&= {\rm Tr}_S\left[\sum_{m,n}\rho_{m,n} \ket{m}\bra{n}\otimes \ket{-i\sqrt{\gamma} m}\bra{-i\sqrt{\gamma} n}\right] \notag\\
&= \sum_{m}\rho_{m,m} \ket{-i\sqrt{\gamma} m}\bra{-i\sqrt{\gamma} m}\cr
&=e^{-\frac{i\pi}{2}a^\dagger a }\left(\sum_{m}\rho_{m,m} |\sqrt{\gamma} m\rangle\langle \sqrt{\gamma} m|\right)e^{i\frac{\pi}{2}a^\dagger a},\cr
\end{align}

where $U$ is defined in Eq.~(\ref{eq:U}) and $|\sqrt{\gamma} m\rangle$ is a coherent state of real amplitude $\sqrt{\gamma} m$, 
i.e.
\begin{equation}
\label{eq:coherent}
|\sqrt{\gamma} m\rangle=e^{-\gamma m^2/2}\sum_{k=0}^{\infty}\frac{(\sqrt{\gamma} \, m)^k}{\sqrt{k!}} |k\rangle.
\end{equation}
The complementary channel is a mixture of coherent state. In fact the input state with $m$ photon number is projected into a coherent state with an amplitude proportional to $m$. The ultimate output state of the complementary channel is a mixture of these coherent states, with the weight given by the probability of having $m$ photons in the input\cite{Horodecki_eb}. 
The complementary channel $\mathcal{N}_\gamma^c$ (\ref{eq:compl}) is entanglement breaking. To show this 
we consider a two-mode squeezed vacuum state
\begin{equation}
|\Psi\rangle_{RS}=\sum_{n=0}^\infty \lambda^n |n\rangle_R |n\rangle_S, \qquad
0\leq \lambda \leq 1,
\end{equation}
being $R$ a reference system isomorphic to $S$ and $\lambda$ the squeezing parameter.
Using Eq.\eqref{eq:compl}, we immediately arrive to 

\begin{align}
&\left(\mathds1_R\otimes\mathcal{N}^c_\gamma\right) |\Psi\rangle_{RS}\langle\Psi|\cr
&=\sum_{m}\lambda^{2m} \ket{m}_R\bra{m}\otimes\ket{-i\sqrt{\gamma} m}_E\bra{-i\sqrt{\gamma} m},
\end{align}

which is a mixture of product states and hence is a separable state for any value of $\lambda$ . Being $\mathcal{N}^c_{\gamma}$ entanglement breaking, $\mathcal{N}_\gamma$ is degradable. Therefore, according to \S\ref{sec:backgroun}, the quantum capacity of bosonic dephasing channel is given by single-letter formula \ref{eq:Q1}: 
\begin{equation}\label{eq:Qsingle}
Q(\mathcal{N}_\gamma)=\max_{\rho} J\left(\rho,{\cal N}_\gamma\right).
\end{equation}

Next we use the phase-covariance property of $\mathcal{N}_{\gamma}$ and the concavity of the coherent information, to restrict the set of density operators over which the maximization in Eq.~(\ref{eq:Qsingle}) should be performed. Similar argument is used in the context of bosonic pure-loss channels \cite{Kyungjoo}. 
\begin{proposition}
\label{prop:OptimalInput}
The optimal input state to ${\cal N}_\gamma$ for the quantum capacity \eqref{eq:Qsingle} is diagonal in the Fock basis.
\end{proposition}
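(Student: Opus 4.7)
The plan is to use the standard phase-covariance plus averaging argument. Starting from any candidate input $\rho$, I would form the phase-averaged state
\[
\bar\rho \,\equiv\, \frac{1}{2\pi}\int_0^{2\pi} e^{-ia^\dagger a\,\theta}\,\rho\, e^{\,ia^\dagger a\,\theta}\, d\theta,
\]
whose off-diagonal elements in the Fock basis vanish because $\rho_{m,n}$ acquires a factor $e^{-i(m-n)\theta}$ that integrates to zero for $m\neq n$. Thus $\bar\rho$ is diagonal in the Fock basis, and it suffices to establish $J(\bar\rho,\mathcal{N}_\gamma)\geq J(\rho,\mathcal{N}_\gamma)$ for every $\rho$, so that the supremum can be restricted to Fock-diagonal inputs.

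For the first term in $J$, Eq.~(\ref{eq:rhoout}) shows that $\mathcal{N}_\gamma$ commutes with every phase rotation $e^{-ia^\dagger a\theta}$, so $\mathcal{N}_\gamma(\bar\rho)=\tfrac{1}{2\pi}\int_0^{2\pi} e^{-ia^\dagger a\theta}\,\mathcal{N}_\gamma(\rho)\,e^{ia^\dagger a\theta}\,d\theta$. Concavity of the von Neumann entropy combined with its unitary invariance then yields $S(\mathcal{N}_\gamma(\bar\rho))\geq S(\mathcal{N}_\gamma(\rho))$. For the second term, inspection of Eq.~(\ref{eq:compl}) reveals that $\mathcal{N}_\gamma^c(\rho)$ depends only on the diagonal entries $\rho_{m,m}$, which are left untouched by any phase rotation; hence $\mathcal{N}_\gamma^c(\bar\rho)=\mathcal{N}_\gamma^c(\rho)$ and consequently $S(\mathcal{N}_\gamma^c(\bar\rho))=S(\mathcal{N}_\gamma^c(\rho))$. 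Subtracting the two statements gives $J(\bar\rho,\mathcal{N}_\gamma)\geq J(\rho,\mathcal{N}_\gamma)$, and the supremum in Eq.~(\ref{eq:Qsingle}) is thus attained on a Fock-diagonal state.

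An equivalent route, closer in spirit to the reference to \cite{Kyungjoo} in the text, would be to invoke concavity of the coherent information itself (which holds because $\mathcal{N}_\gamma$ has already been shown to be degradable) together with the phase-invariance $J(e^{-ia^\dagger a\theta}\rho e^{ia^\dagger a\theta},\mathcal{N}_\gamma)=J(\rho,\mathcal{N}_\gamma)$; Jensen's inequality applied to the uniform average over $\theta\in[0,2\pi]$ delivers the same conclusion. The only real obstacle I anticipate is of a technical nature: in the continuous-variable setting one must check that the integral defining $\bar\rho$ converges in trace norm and that the von Neumann entropies involved remain finite. Restricting to inputs with bounded mean photon number -- the physically meaningful regime, and the one implicitly used in the sections to come -- is enough, since entropy is continuous on such energy-constrained sets.
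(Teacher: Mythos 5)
Your proposal is correct, and your primary argument takes a mildly but genuinely different route from the paper's. The paper works with the coherent information as a whole: it establishes phase covariance of both $\mathcal{N}_\gamma$ and $\mathcal{N}^c_\gamma$, deduces $J(\rho_\theta,\mathcal{N}_\gamma)=J(\rho,\mathcal{N}_\gamma)$, and then invokes the concavity of $J(\cdot,\mathcal{N}_\gamma)$ for degradable channels together with Jensen's inequality over the flat distribution on $\theta$ --- exactly your ``equivalent route.'' Your main argument instead splits $J$ into its two entropy terms and handles them separately: the output-entropy term increases under phase averaging by ordinary concavity of the von Neumann entropy plus covariance, while the complementary term is \emph{exactly unchanged} because $\mathcal{N}^c_\gamma(\rho)$ in Eq.~(\ref{eq:compl}) depends only on the diagonal entries $\rho_{m,m}$, which phase rotations preserve. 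This buys you something: you do not need the (nontrivial, citation-requiring) lemma that coherent information is concave for degradable channels; concavity of entropy and the explicit form of the complementary channel suffice. The price is that your argument is tailored to this channel's structure, whereas the paper's version is modular and would apply verbatim to any phase-covariant degradable channel without inspecting $\mathcal{N}^c$. Your closing remark about trace-norm convergence of the average and finiteness of entropies under an energy constraint addresses a technicality the paper passes over silently; it is the right caveat, since the difference $S(\mathcal{N}_\gamma(\bar\rho))-S(\mathcal{N}_\gamma(\rho))\geq 0$ only translates into $J(\bar\rho,\mathcal{N}_\gamma)\geq J(\rho,\mathcal{N}_\gamma)$ when the subtracted entropies are finite.
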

\begin{proof}
From Eq.~(\ref{eq:intdep}) it follows that the quantum dephasing channel is phase-covariant, that is for $U_\theta=e^{-i a^\dag a \theta}$ with $\theta\in[0,2\pi)$ we have 
\begin{equation}
\label{eq:Ncovariant}
{\cal N}_\gamma(U_\theta\rho U^\dag_\theta)
=U_\theta{\cal N}_\gamma(\rho) U_\theta^\dag.
\end{equation}
Similarly, from Eq.~(\ref{eq:compl}), we conclude that also the complementary channel 
$\mathcal{N}^{c}_{\gamma}$ is phase-covariant:
\begin{equation}
\label{eq:NCcovariant}
\mathcal{N}^{c}_{\gamma}(U_\theta\rho U^\dag_\theta)
=U_\theta\mathcal{N}^{c}_{\gamma}(\rho) U_\theta^\dag.
\end{equation}
As the von-Neumann entropy is invariant under unitary conjugate, from Eqs.~(\ref{eq:Ncovariant}) and (\ref{eq:NCcovariant}) we conclude that
\begin{equation}
\label{eq:vonNumInv}
J(\rho_\theta,{\cal N}_\gamma)=J(\rho,{\cal N}_\gamma),
\end{equation}
with 
$\rho_\theta\equiv U_\theta\rho {U_\theta}^{\dag}$. 
On the other hand, for degradable channels, the coherent information is a concave function of its input state, that is
\begin{equation}
\label{Jcon}
\int_0^{2\pi} J\left(\rho_\theta,{\cal N}_\gamma\right)p(\theta)d\theta
\le J\left(\int_0^{2\pi} \rho_\theta p(\theta) d\theta ,{\cal N}_\gamma\right),
\end{equation}
for any probability distribution $p(\theta)$. 
Thus from Eq.~(\ref{eq:vonNumInv}) and \eqref{Jcon} it is straightforward to see that 
\begin{equation}
J\left(\rho,{\cal N}_\gamma\right)
\le J\left(\int_0^{2\pi} \rho_\theta p(\theta) d\theta ,{\cal N}_\gamma\right).
\label{Jcon2}
\end{equation}
Then, choosing $p(\theta)$ as flat distribution, we find
\begin{align}
\int_0^{2\pi} \rho_\theta p(\theta) d\theta&=\frac{1}{2\pi}\sum_{m,n}^\infty \int_0^{2\pi}
\rho_{m,n} |m\rangle\langle n| e^{i\theta(m-n)}d\theta \notag\\
&=\sum_{n}^\infty \rho_{n,n} |n\rangle\langle n|.
\end{align}
Finally, inserting this into the r.h.s. of \eqref{Jcon2} gives 
\begin{equation}
\label{eq:JUpperUniform}
J\left(\rho,{\cal N}_\gamma\right)
\le J\left(\sum_{n=0}^\infty \rho_{n,n} |n\rangle\langle n|, {\cal N}_\gamma\right),
\end{equation}
i.e. the desired result.
\end{proof}
The fact that the optimal input state is diagonal in the Fock basis, can be interpreted more intuitively by noting that the steady-states, or the state that remains invariant under the Markov process generated by $\mathcal{L}$ in Eq.~(\ref{eq:Gen}), 
is not unique. In fact all Fock states are invariant under the dynamics generated by $\mathcal{L}$ in Eq.~(\ref{eq:Gen}). Hence every mixture of invariant states, which is a state diagonal in the Fock basis, is invariant under this evolution. Of course in the subset of steady-states of Markovian dynamics generated by $\mathcal{L}$ in Eq.~(\ref{eq:Gen}) we should find the one that can carry the 
 largest amount of quantum information through the channel.

As a consequence of Proposition \ref{prop:OptimalInput}, the maximization in Eq.~(\ref{eq:Qsingle}) reduces to the maximization over classical probability distribution: 
\begin{align}
\label{eq:Qfinal}
Q({\cal N}_\gamma)=\max_{p_m} 
\Bigg[ &S\left(\sum_{m=0}^{\infty} p_m |m\rangle\langle m| \right)\notag\\
&-  S\left(\sum_{m=0}^{\infty} p_m |\sqrt{\gamma} m\rangle\langle \sqrt{\gamma}m| \right) \Bigg],
\end{align}
where we have used the last equality in Eq.~(\ref{eq:compl})
and the invariance of entropy under unitary conjugation. A lower bound to \eqref{eq:Qfinal} can be found by considering an input state to be 
diagonal in the Fock basis and containing only two elements with equal weight, i.e. 
\begin{equation}\label{eq:Omega2}
\Omega_j=\frac{1}{2}(\ket{n}\bra{n}+\ket{n+j}\bra{n+j}),
\end{equation}
where $n,j$ are arbitrary non-negative integers. 
In such a case it is easy to see that 
$\sum_m p_m |\sqrt{\gamma} m\rangle\langle \sqrt{\gamma}m|$ is diagonalized 
in the following basis 
\begin{align}
&\frac{1}{\sqrt{2+2e^{-\gamma j^2/2}}} \left(|\sqrt{\gamma} n\rangle+|\sqrt{\gamma} (n+j)\rangle\right), \\
&\frac{1}{\sqrt{2-2e^{-\gamma j^2/2}}} \left(|\sqrt{\gamma} n\rangle-|\sqrt{\gamma} (n+j)\rangle\right),
\end{align}
with eigenvalues
\begin{equation}
\label{eq:q}
q_\pm(j)\equiv\frac{1}{2}\left(1\pm e^{-\gamma j^2/2}\right).
\end{equation}
Thus we have 
\begin{equation}
\label{eq:JUni2}
J(\Omega_j,\mathcal{N})=1-H_2(q_+(j),q_-(j)),
\end{equation}
with $H_2$ the binary entropy.

We note that the eigenvalues of $\mathcal{N}^{c}_{\gamma}(\Omega_j)$ in Eq. (\ref{eq:q}) do not depend on $n$. Furthermore, by increasing $j$, the distance between $q_+(j)$ and $q_-(j)$ decreases and as a consequence $H_2(q_+(j),q_-(j))$ decreases too. Therefore,  $J(\Omega_j,\mathcal{N})$ in Eq.~(\ref{eq:JUni2}) is maximized for $j=1$ and a a lower bound for quantum capacity is given by
$J(\Omega_{1},\mathcal{N}_{\gamma})$ which is obtained for input state $\Omega_1$ with arbitrary $n$. 

In order to obtain the quantum capacity, it is necessary to go beyond 
the input state (\ref{eq:Omega2}) by considering more terms in the sum and  
non trivial probability distributions.
The task is complicated because computing the second term of Eq.~(\ref{eq:Qfinal}) requires the diagonalization of a mixture of infinite number of coherent states. Hence, in the next section we will use numerical tools. 


\section{Numerical analysis}
\label{sec:numerical}

In this section we resort to numerical techniques to evaluate the quantum capacity. 
First we truncate the space ${\cal H}_S$ to dimension $N+1$. According to Proposition \ref{prop:OptimalInput}, the optimal input state is diagonal in the Fock basis and in
a truncated Hilbert space it takes the form
\begin{equation}
    \rho=\sum_{m=0}^N p_m\ket{m}\bra{m}.
\end{equation}
The mean energy of this state is given by $\sum_{m=0}^Nmp_m$, with maximum value $N$. Therefore the truncation of Hilbert space
can be regarded as constraining the input average energy \footnote{On the other way around, if we constraint the input average energy as $\sum_{m=0}^{\infty}mp_m=E<+\infty$,
we know that $\forall\epsilon>0, \; \exists N_\epsilon : |\sum_{m=0}^{N}mp_m-E|<\epsilon$ for $N>N_\epsilon$. And this can be regarded as truncating the Hilbert space to the dimension $N_\epsilon$ (within an accuracy $\epsilon$).}.
Then, we find the following maximum numerically
\begin{align}
\label{eq:QTrunckated}
Q_{N+1}({\cal N}_\gamma)=\max_{p_m} 
\Bigg[ &S\left(\sum_{m=0}^{N} p_m |m\rangle\langle m| \right)\notag\\
&-  S\left(\sum_{m=0}^{N} p_m |\sqrt{\gamma} m\rangle\langle \sqrt{\gamma}m| \right) \Bigg],
\end{align}
and by analyzing its behaviour by increasing $N$, we obtain the quantum capacity in Eq.~(\ref{eq:Qfinal}). 

For $N=1$, maximizing the right hand side of Eq.~(\ref{eq:QTrunckated}), yields the optimal probability distribution to be uniform, that is $p_0=p_1=\frac{1}{2}$, as shown in  Fig.~\ref{fig:N1} together with $Q_2$. This implies that for $N=1$ the probability distribution in Eq.~(\ref{eq:Omega2}) is optimal.
\begin{figure}
    \centering
    \includegraphics[width=\columnwidth]{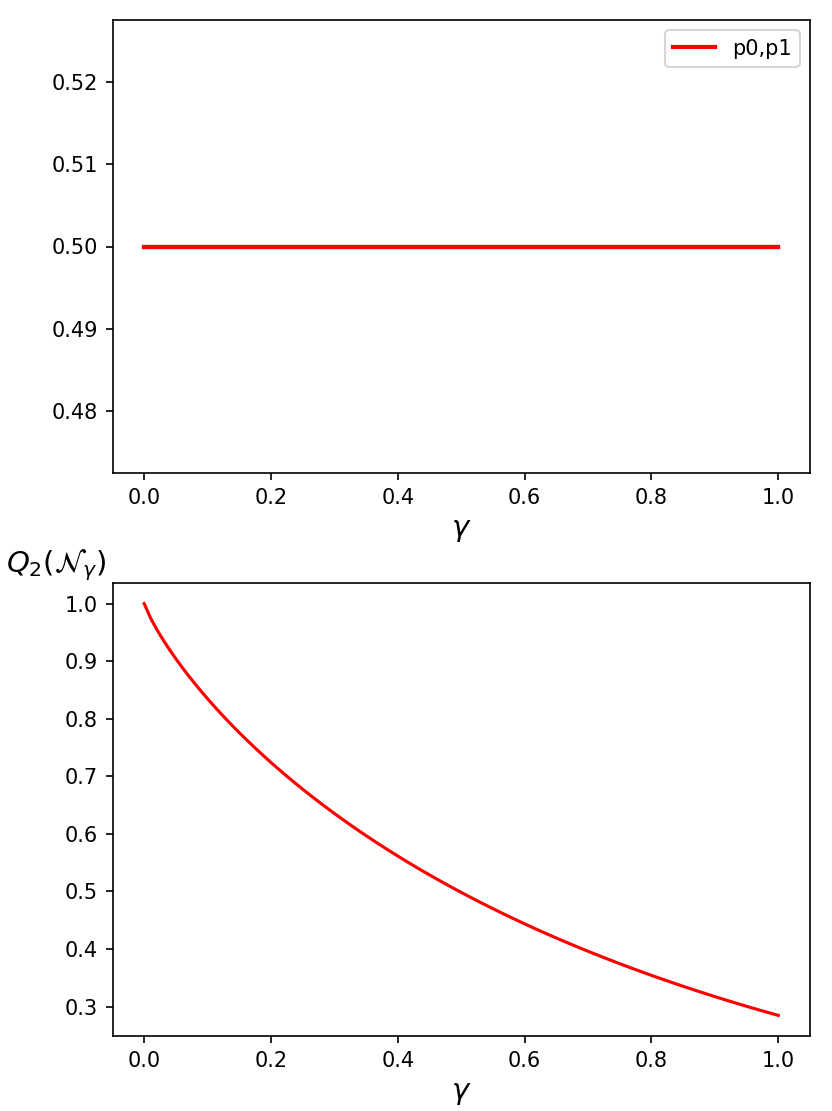}
    \caption{Top: optimal probability distribution for $N=1$ versus $\gamma$. 
    Bottom: $Q_2(\mathcal{N}_\gamma)$ versus $\gamma$.}
    \label{fig:N1}
\end{figure}
It is worth noting that even by truncating the sum in equation Eq.~(\ref{eq:Qfinal}), the numerical analysis is lengthy. The root of that goes back to the fact that by increasing $N$ not only the number of involved coherent states \eqref{eq:coherent} increases, but also their amplitudes increase. In fact, by increasing $m$, the number of required terms at the r.h.s. of \eqref{eq:coherent} to be considered increases, which is equivalent to a longer time for the numerical task. In the next section, we explain an algorithm which mitigates this problem. 


\subsection{Replica method}
\label{subsec:Replica}

We now explain our approach for numerical calculation of $Q_{N+1}$ in Eq.~(\ref{eq:QTrunckated}). Obviously, computing the first term is straightforward. For computing the second term, we will make use of the replica method \cite{GKKC,GKKC2,Calabrese}.

It is known that the von-Neumann entropy of a density matrix $\Omega$ can be written as \cite{GKKC}
\begin{equation}
\label{eq:vonNeu}
S(\Omega)=-\mathrm{Tr}(\Omega \log\Omega)=-\partial_{n}\mathrm{Tr}(\Omega^{n})|_{n=1},
\end{equation}
with $\partial_n$ denoting derivative with respect to $n$.
Therefore, instead of diagonalizing $\Omega$, one can compute the entropy through the trace of $\Omega^n$. For our purpose, we denote the density matrix appearing in the second term of $Q_{N+1}(\mathcal{N}_\gamma)$ in Eq.~(\ref{eq:QTrunckated}) which is a unitary conjugate of the complemnetary channel's output, by
\begin{equation}
    \label{eq:Omega}
    \Omega\equiv\sum_{m=0}^{N}p_{m}\ket{m\sqrt{\gamma}}\bra{m\sqrt{\gamma}},
\end{equation}
and for arbitrary $n$, express $\Omega^n$ in terms of coherent states as
\begin{equation}
    \label{eq:OmegaN}
    \Omega^n=\sum_{i,j=1 }^NC_{ij}^{(n)}\ket{\sqrt{\gamma}i}\bra{\sqrt{\gamma}j},
\end{equation}
with $C_{ij}^{(1)}=p_{i}\delta_{i,j}$. It then follows that
\begin{equation}
\label{eq:TrOmegaN}
    \mathrm{Tr}\left(\Omega^{n}\right)=\sum_{i,j=1 }^NC_{ij}^{(n)}e^{-\frac{\gamma}{2}(i-j)^2}.
\end{equation}
By considering that  $\Omega^n=\Omega^{n-1}\Omega$ and taking into account Eqs.~(\ref{eq:Omega}) and (\ref{eq:OmegaN}), the following  recurrence relation can be derived:
\begin{equation}
\label{eq:recurrence}
    C^{(n)}=C^{(n-1)}A,
\end{equation}
with 
\begin{equation}
\label{eq:A}
    A_{ij}=e^{-\frac{\gamma}{2}(i-j)^2}p_j,\qquad i,j=1,\ldots, N.
\end{equation}
Thus using  Eq.~(\ref{eq:recurrence}) in Eq.~(\ref{eq:TrOmegaN}) we conclude that
\begin{equation}
\label{eq:TrOmegaA}    
\mathrm{Tr}\left(\Omega^{n}\right)=\mathrm{Tr}\left(A^n\right)=\sum_{i=1}^Na_i^n,
\end{equation}
with $\{a_i\}_i$ the eigenvalues of the matrix $A$. Finally, from Eqs.~(\ref{eq:vonNeu}) and (\ref{eq:TrOmegaA}), we have 
\begin{equation}
\label{eq:SA}
    S(\Omega)=-\partial_{n}\mathrm{Tr}(A^{n})|_{n=1}=-\sum_{i=1}^Na_{i}\log a_{i},
\end{equation}
which implies that the numerical computation of $S(\Omega)$ can be done through the $N\times N$ matrix $A$ without any need to involve coherent states.

\medskip

By using Eq.~(\ref{eq:SA}) we compute the second term of Eq.~(\ref{eq:QTrunckated}) numerically, and optimize the whole expression over the probability distribution, $p_m$s. We find optimal values of $p_m$, as shown in Fig.~\ref{fig:OptimalP} for
$N=2,3,4,5$. Proceeding up to $N=8$, we observed the following relation between the optimal values of  $p_m$s:
\begin{align}
    &p_m<p_{m+1},\quad \;\mathrm{for} \quad 0\leq m\leq\lfloor\frac{N}{2}\rfloor,
    \label{eq:OPtimalp}\\
   &p_m=p_{N-m},\quad \mathrm{for} \quad \lfloor\frac{N}{2}\rfloor<m\leq N.
   \label{eq:OPtimalpOPtimalp}
\end{align}
\begin{figure}
    \centering
    \includegraphics[width=0.8\columnwidth]{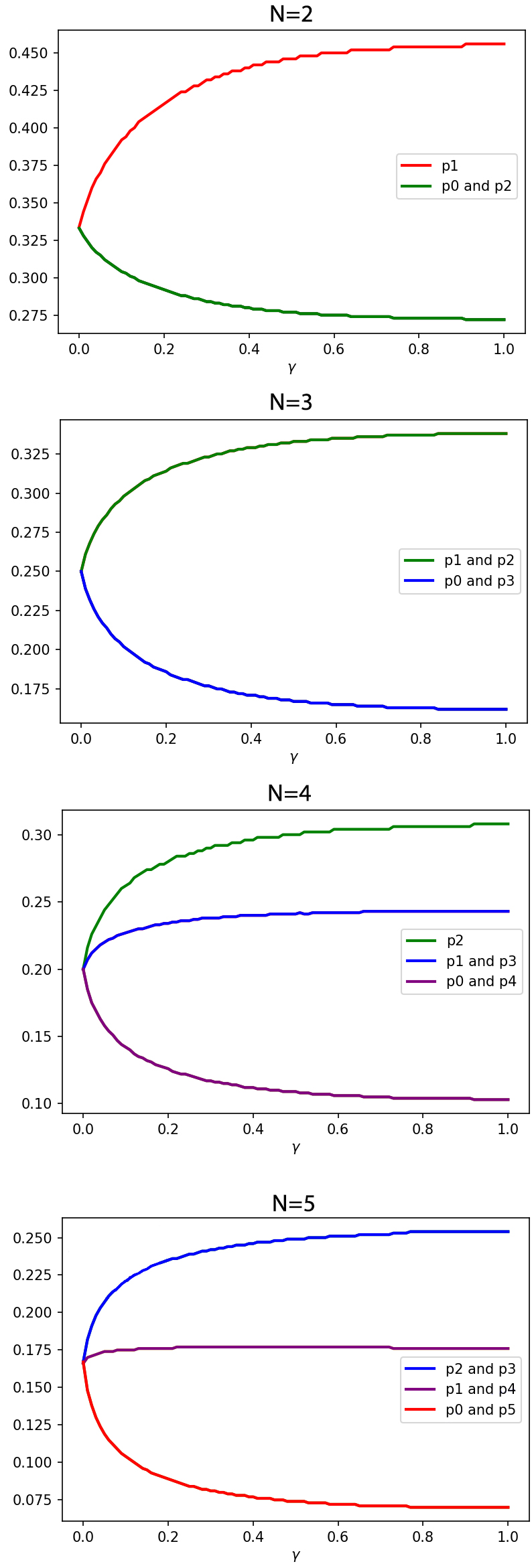}
    \caption{Optimal value of $p_m$ for $m=0,1,\cdots,N$ versus $\gamma$. From top to bottom $N=2, 3, 4, 5$.}
    \label{fig:OptimalP}
\end{figure}
For the obtained optimal probability distributions, $Q_{N+1}(\mathcal{N}_\gamma)$ is shown in Fig.~\ref{fig:Qn-N1-N5} versus $\gamma$ for  $N=1,\ldots, 8$. As expected $Q_{N+1}(\mathcal{N}_\gamma)$ monotonically decreases versus the noise parameter $\gamma$. 

For probability distribution with the pattern given in \eqref{eq:OPtimalp} and \eqref{eq:OPtimalpOPtimalp} it is straightforward to see that the mean energy of the optimal input state is $\frac{N}{2}$ which is linearly increasing by $N$. 
\begin{figure}
\centering
\includegraphics[width=\columnwidth]{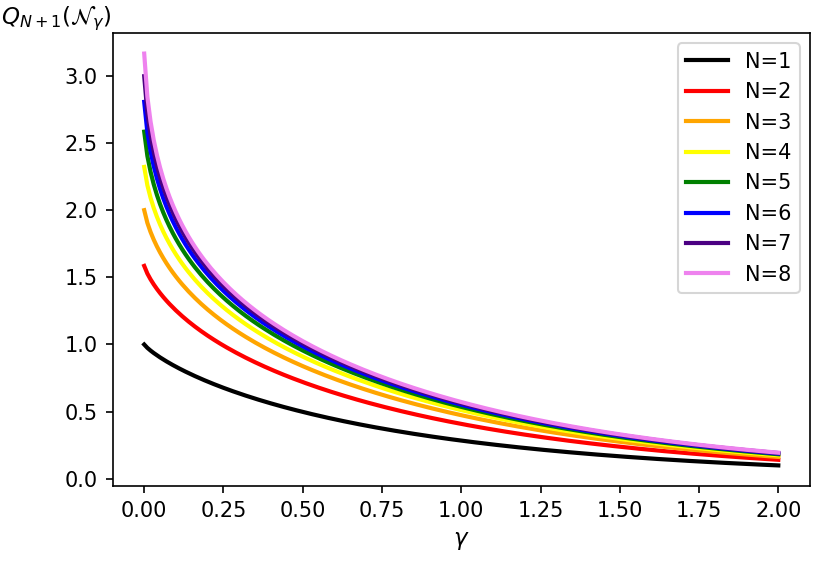}
\caption{$Q_{N+1}(\mathcal{N}_\gamma)$, as defined in Eq.~(\ref{eq:QTrunckated}),
versus $\gamma$ for $N=1,\ldots,8$.}
\label{fig:Qn-N1-N5}
\end{figure}

In the next subsection we try to figure out the probability distribution that fits well with properties in Eqs.~\eqref{eq:OPtimalp} and \eqref{eq:OPtimalpOPtimalp}.


\subsection{Optimal probability distribution}
\label{subsec:OptimalP}

We discuss here the actual form of the optimal probability distribution. 
From Eqs.~\eqref{eq:OPtimalp} and \eqref{eq:OPtimalpOPtimalp}, it is concluded that the optimal probability distribution can not have more that one peak, hence bimodal probability distributions are not optimal distributions. 
Furthermore, Eq.~\eqref{eq:OPtimalp} and \eqref{eq:OPtimalpOPtimalp} imply that the optimal probability distribution is symmetric around its peak at $m=\lfloor\frac{N}{2}\rfloor$. Therefore, 
a non-symmetric unimodal probability distribution, such as the thermal distribution, is not an acceptable candidate for optimal probability distribution in computing $Q_{N+1}$. 

A candidate for discrete probability distributions satisfying these properties is the discrete Gaussian probability distribution 
\begin{equation}
\label{eq:DisGau}
    p_m(\mu,\sigma(N,\gamma))=\frac{1}{M(\mu,\sigma)}e^{-\frac{(m-\mu)^2}{2\sigma^2(N,\gamma)}}, 
\end{equation}
with $m\in\{0,1,\cdots,N\}$. It is
centered around $\mu=\frac{N}{2}$ and has a width controlled by $\sigma$. 
Furthermore, $M(\mu,\sigma)$ is the normalization factor
\begin{equation}
    M(\mu,\sigma(N,\gamma))=\sum_{m=0}^N e^{-\frac{(m-\mu)^2}{2\sigma^2(N,\gamma)}}.
\end{equation}
From  Eqs.~\eqref{eq:OPtimalp} and \eqref{eq:OPtimalpOPtimalp} we know that for all values of 
$\gamma$, $p_m$ attains the maximum value for $m=\lfloor\frac{N}{2}\rfloor$. Therefore, 
we set $\mu=\frac{N}{2}$ and vary $\sigma$ to find the best fit to the optimal probability distribution obtained numerically in Sec.~\ref{subsec:Replica}. It is worth mentioning that for odd $N$, the maximum value of probability distribution does not pass any $p_m$, but still $p_{m}$ with $m=\lfloor\frac{N}{2}\rfloor$ and $m=\lfloor\frac{N}{2}\rfloor+1$ are equal and  have maximum values. 

By varying $\sigma$ we can fit discrete Gaussian distribution to the optimal probability distribution obtained in Sec.~{\ref{subsec:Replica}} for $N=1,\ldots,5$. We observe that $\sigma$ is linear in $N$:
\begin{equation}
    \sigma(\gamma, N)\approx a(\gamma) N+ b(\gamma),
\end{equation}
and for $\gamma>0.2$ the coefficients $a(\gamma)$ and $b(\gamma)$ are almost constant, that is $\sigma\approx 0.2N+0.6$. 

By taking $p_m$s in Eq.~(\ref{eq:QTrunckated}) from discrete Gaussian probability distribution as in Eq.~(\ref{eq:DisGau}) with $\mu=N/2$ and numerically maximizing it over $\sigma$, we calculate $Q_{N+1}(\mathcal{N}_\gamma)$. The obtained quantities for $N=1,\ldots, 5$ exactly coincide with the corresponding curves in Fig. \ref{fig:Qn-N1-N5}. Additionally, with the same procedure we obtained the behaviour of $Q_{N+1}(\mathcal{N}_\gamma)$ for $N=6,7$ and $8$ as depicted in Fig.~(\ref{fig:Qn-N1-N5}). 

As can be seen in Fig.~\ref{fig:Qn-N1-N5}, by increasing $N$, the curves become closer and closer, especially at large values of $\gamma$.  
This implies that for large values of $N$, Fig.\ref{fig:Qn-N1-N5} shows a very close approximation to the quantum capacity $Q(\mathcal{N}_\gamma)$ in
Eq.~(\ref{eq:Qfinal}) versus noise parameter $\gamma$. 
This is also reminiscent of the fact that whenever the coherent information of a one-mode
Gaussian channel is non-zero, its supremum is achieved for input power going to infinity \cite{Bradler}.
Fig.~\ref{fig:QNvsN} shows the behaviour of $Q_{N+1}(\mathcal{N}_\gamma)$ versus $N$ for some fixed values of noise parameter $\gamma$. Actually, it shows that $Q_{N+1}(\mathcal{N}_\gamma)$ saturates after a finite value of $N$ and the larger the noise  parameter is, 
the smaller is the value of $N$ at which
the saturation happens. 
\begin{figure}
\centering
\includegraphics[width=\columnwidth]{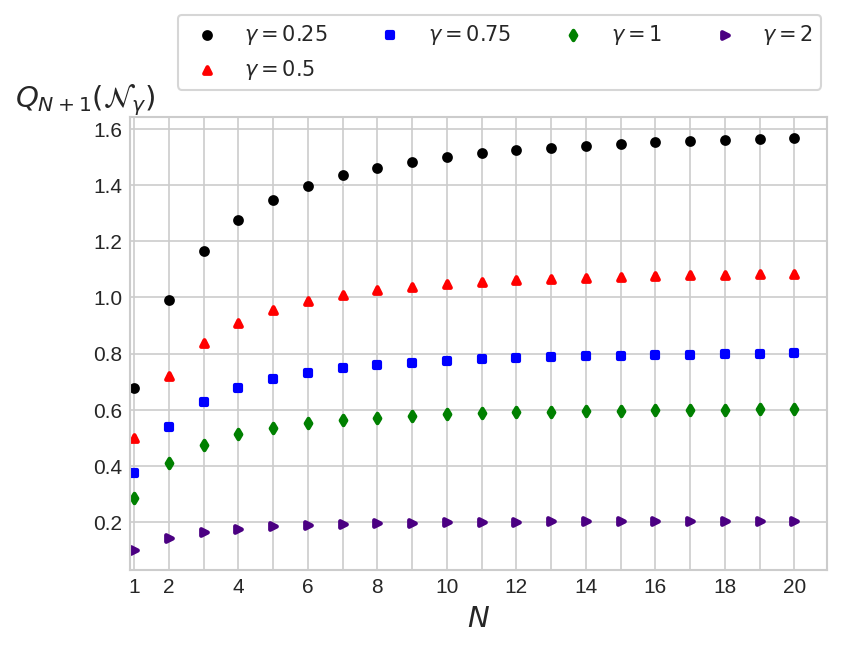}
\caption{$Q_{N+1}(\mathcal{N}_\gamma)$ as defined in Eq.~(\ref{eq:QTrunckated}) versus $N$. From top to bottom $\gamma=0.25,0.5,0.75,1,2$.}
\label{fig:QNvsN}
\end{figure}


\section{Asymptotic behaviour of quantum capacity}
\label{sec:asymptotic}

In this section, we discuss the asymptotic behavior of the quantum capacity of dephasing channel in terms of the dephasing rate, or noise parameter. As seen in Sec.~\ref{sec:QuantumCapacity}, the dephasing channel is degradable, hence its quantum capacity is equal to its private classical
capacity \cite{Smith2008}. On the other hand, the private classical capacity is always non-negative \cite{private,private2}. 
Therefore, $Q(\mathcal{N}_{\gamma})$ is always non-negative. However,
the decreasing behavior of  $Q_{N+1}(\mathcal{N}_\gamma)$ suggests that $Q_{N+1}(\mathcal{N}_\gamma)$ and hence $Q(\mathcal{N}_\gamma)$ asymptotically approaches zero from above for $\gamma\to\infty$. Actually in what follows we show that for large values of $\gamma$, $Q_{N+1}(\mathcal{N}_\gamma)$ and hence the quantum capacity decrease exponentially. 
 
While so far we have used replica method to ease the numerical analysis of the second term of $Q_{N+1}(\mathcal{N}_\gamma)$ in Eq.~(\ref{eq:QTrunckated}), here we use this technique to derive the behavior of $Q_{N+1}(\mathcal{N}_\gamma)$ and quantum capacity $Q(\mathcal{N}_\gamma)$ for large values of $\gamma$. Elements of matrix $A$ as defined in Eq.~(\ref{eq:A}) are all non-zero. Define $\epsilon\equiv e^{-\frac{\gamma}{2}}$ which is small for large values of $\gamma$. The matrix $A$ up to order $\mathcal{O}(\epsilon)$ is given by
\begin{equation}
\label{eq:AorderE}
    A_{i,j}=p_j\delta_{i,j}+\epsilon\, p_j\left(\delta_{i,j+1}+\delta_{i+1,j}\right)+\mathcal{O}(\epsilon^2).
\end{equation}
Therefore, by straightforward calculation, we obtain
\begin{equation}
  {\rm Tr}(A^n)=\sum_{m=0}^N p_m^n+\mathcal{O}(\epsilon^2),
\end{equation}
which by considering the first equality in Eq.~(\ref{eq:SA}) leads to 
$S(\Omega)=-\sum_{m=0}^Np_m\log p_m$ and therefore
$Q_{N+1}(\mathcal{N}_\gamma)$ as defined in Eq.~(\ref{eq:QTrunckated}) vanishes if we keep terms up to order $\epsilon$, because
\begin{equation}
    Q_{N+1}(\mathcal{N}_\gamma)\approx\mathcal{O}(\epsilon^2).
\end{equation}
Hence to see the asymptotic behaviour of $Q_{N+1}(\mathcal{N}_\gamma)$ for large values of $\gamma$, we write the matrix $A$ up to order $\mathcal{O}(\epsilon^2)$:
\begin{align}
    A_{i,j}&=p_j\delta_{i,j}+\epsilon p_j\left(\delta_{i,j+1}+\delta_{i+1,j}\right)\cr
    &+\epsilon^2p_j(\delta_{i,j+2}+\delta_{i+2,j})+\mathcal{O}(\epsilon^3).
\end{align}
Straightforward calculations give
\begin{equation}
  {\rm  Tr}(A^n)=\sum_{m=0}^Np_m^n+n \epsilon^2 \sum_{m=0}^{N-1}\frac{p_m^np_{m+1}-p_mp_{m+1}^n}{p_m-p_{m+1}},
\end{equation}
which, using Eq.~(\ref{eq:SA}) and replacing $\epsilon^2$ by $e^{-\gamma}$, leads to
\begin{equation}
\label{eq:QNorderEp2}
    Q_{N+1}(\mathcal{N}_{\gamma})=e^{-\gamma}\sum_{m=0}^{N-1}\frac{p_mp_{m+1}}{p_m-p_{m+1}}\log\left(\frac{p_m}{p_{m+1}}\right)+\mathcal{O}(\epsilon^3).
\end{equation}
As discussed in Sec.\ref{subsec:OptimalP} for large values of $\gamma$, the mean value and variance of optimal probability
distribution in Eq.~(\ref{eq:DisGau})
do not depend on $\gamma$. Thus, the summation in  Eq.~(\ref{eq:QNorderEp2}) does not depend 
on $\gamma$. Therefore Eq.~(\ref{eq:QNorderEp2}) implies that, for large values of $\gamma$, $Q_{N+1}(\mathcal{N}_\gamma)$ and hence $Q(\mathcal{N}_\gamma)$ approach zero exponentially.


\section{Conclusion}
\label{sec:conclusion}

Summarizing, we have studied the capability of Bosonic dephasing channel for transmitting quantum information.
We have analytically proved that for such a channel, coherent information is additive
and the optimal input state is diagonal in the Fock basis, which is invariant under the noise action.
Then, by using the replica method which makes numerical analysis technically feasible, we 
showed that the optimal probability distribution for the mixture of Fock states is unimodal and symmetric around its maximum. Among possible choices satisfying these two constraints, we took discrete Gaussian distribution as an ansatz and by varying its average and variance we showed that it fits the optimal probability distribution 
 Interestingly, this distribution is almost independent on the noise parameter
$\gamma$, but the quantum capacity varies with $\gamma$, as the output of the complementary channel depends on the noise parameter. We found it useful to truncate the dimension of Hilbert space, which is equivalent to restrict the input energy, and define $Q_{N+1}(\mathcal{N}_\gamma)$ as the maximum of coherent information in truncated space. Then, we numerically evaluated the quantum capacity by finding the asymptotic behavior of $Q_{N+1}(\mathcal{N}_\gamma)$ when enlarging the dimension of truncated Hilbert space, as it saturates to a finite value (see Figs.~\ref{fig:Qn-N1-N5} and \ref{fig:QNvsN}). Our results show that the optimal input state for transmitting quantum information through a continuous-variable quantum dephasing channel, is a mixture of number states with discrete Gaussian distribution, which is clearly not a Gaussian state. We also discussed that quantum capacity approaches zero from above when the noise parameter increases. For large values of dephasing rate, this decay is exponential.

It is worth observing that bosonic dephasing channel results an Hadamard channel as a consequence 
of having proved that its complementary
channel is entanglement breaking. 
This property implies that the triple trade-off capacity regions are single-letter, as shown in
\cite{WildeHSieh2012}.
 It could be the subject of a future investigation to determine the whole triple-trade-off region, and the replica method
might be useful there as well
(the only other known example of a bosonic channel of physical interest that is Hadamard
 and for which it is known the full triple-trade-off region,
  is the quantum-limited amplifier channel \cite{QiWilde2017}

Not only we are confident that this work can pave the way for studying quantum communication 
with continuous-variable quantum channels
beyond the usual restriction of Gaussianity, 
but it can already be useful in the context of optical communications
where dephasing effects are relevant \cite{Derickson}.
In particular, the achieved result sets an upper bound to the private communication rate, 
which is a key aspect for technological developments. 
\acknowledgments
 L. M. acknowledges financial support by Sharif University of Technology, Office of Vice President for Research
under Grant No. G930209 and hospitality by university of Camerino where parts of this work were completed.

S. M. acknowledges useful discussions with M. M. Wilde in the early stage of this project.
He also acknowledges the funding from the European Union’s Horizon 2020 research and 
innovation programme under grant agreement No 862644 (FET-Open project “QUARTET”).
\bibliography{QC-DephasingChannel}
\end{document}